\documentclass{article}%
\usepackage{amsmath}
\usepackage{amsfonts}
\usepackage{amssymb}
\usepackage{graphicx}%
\setcounter{MaxMatrixCols}{30}
\providecommand{\U}[1]{\protect\rule{.1in}{.1in}}
\newtheorem{theorem}{Theorem}

\newtheorem{algorithm}[theorem]{Algorithm}

\newtheorem{corollary}[theorem]{Corollary}

\newtheorem{lemma}[theorem]{Lemma}

\newenvironment{proof}[1][Proof]{\noindent\textbf{#1.} }{\ \rule{0.5em}{0.5em}}
\begin{document}

\author{Vadim E. Levit and David Tankus\\Department of Computer Science and Mathematics\\Ariel University Center of Samaria, ISRAEL\\\{levitv, davidta\}@ariel.ac.il}
\title{Well-Covered Graphs Without Cycles of Lengths 4, 5 and 6}
\date{}
\maketitle

\begin{abstract}
A graph $G$ is \textit{well-covered} if all its maximal independent sets are
of the same cardinality. Assume that a weight function $w$ is defined on its
vertices. Then $G$ is $w$\textit{-well-covered} if all maximal independent
sets are of the same weight. For every graph $G$, the set of weight functions
$w$ such that $G$ is $w$-well-covered is a \textit{vector space}. Given an
input graph $G$ without cycles of length $4$, $5$, and $6$, we characterize
polynomially the vector space of weight functions $w$ for which $G$ is $w$-well-covered.

Let $B$ be an induced complete bipartite subgraph of $G$ on vertex sets of
bipartition $B_{X}$ and $B_{Y}$. Assume that there exists an independent set
$S$ such that each of $S\cup B_{X}$ and $S\cup B_{Y}$ is a maximal independent
set of $G$. Then $B$ is a \textit{generating} subgraph of $G$, and it
\textit{produces} the restriction $w(B_{X})=w(B_{Y})$. It is known that for
every weight function $w$, if $G$ is $w$-well-covered, then the above
restriction is satisfied.

In the special case, where $B_{X}=\{x\}$ and $B_{Y}=\{y\}$, we say that $xy$
is a \textit{relating edge}. Recognizing relating edges and generating
subgraphs is an \textbf{NP-}complete problem. However, we provide a polynomial
algorithm for recognizing generating subgraphs of an input graph without
cycles of length $5$, $6$ and $7$. We also present a polynomial algorithm for
recognizing relating edges in an input graph without cycles of length $5$ and
$6$.

\end{abstract}

\section{Introduction}

Throughout this paper $G = (V,E)$ is a simple (i.e., a finite, undirected,
loopless and without multiple edges) graph with vertex set $V = V (G)$ and
edge set $E = E(G)$.

Cycles of $k$ vertices are denoted by $C_{k}$. When we say that $G$ does not
contain $C_{k}$ for some $k \geq3$, we mean that $G$ does not admit subgraphs
isomorphic to $C_{k}$. It is important to mention that these subgraphs are not
necessarily induced. Let $G(\widehat{C_{i_{1}}},..,\widehat{C_{i_{k}}})$ the
family of all graphs which do not contain $C_{i_{1}}$,...,$C_{i_{k}}$.

Let $u$ and $v$ be two vertices in $G$. The \textit{distance} between $u$ and
$v$, denoted $d(u,v)$, is the length of a shortest path between $u$ and $v$,
where the length of a path is the number of its edges. If $S$ is a non-empty
set of vertices, then the \textit{distance} between $u$ and $S$, denoted
$d(u,S)$, is defined by
\[
d(u,S)=\min\{d(u,s):s\in S\}.
\]
For every integer $i$, denote
\[
N_{i}(S)=\{x\in V:d(x,S)=i\},
\]
and
\[
N_{i}\left[  S\right]  =\{x\in V:d(x,S)\leq i\}.
\]

We abbreviate $N_{1}(S)$ and $N_{1}\left[  S\right]  $ to be $N(S)$ and
$N\left[  S\right]  $, respectively. If $S$ contains a single vertex, $v$,
then we abbreviate $N_{i}(\{v\})$, $N_{i}\left[  \{v\}\right]  $, $N(\{v\})$,
and $N\left[  \{v\}\right]  $ to be $N_{i}(v)$, $N_{i}\left[  v\right]  $,
$N(v)$, and $N\left[  v\right]  $, respectively. We denote by $G[S]$ the
subgraph of $G$ induced by $S$.

Let $G=(V,E)$ be a graph, and let $S$ and $T$ be two sets of vertices of $G$.
Then $S$ \textit{dominates} $T$ if $T\subseteq N\left[  S\right]  $. A set of
vertices $S$ is \textit{independent} if its elements are pairwise nonadjacent.
An independent set is \textit{maximal} if it is not a subset of another
independent set. The graph $G$ is \textit{well-covered} if all its maximal
independent sets are of the same cardinality. This concept was introduced by
Plummer in \cite{p:concepts}. The problem of finding a maximum cardinality
independent set is \textbf{NP-}complete. However, if the input is restricted
to well-covered graphs, then a maximum cardinality independent set can be
found polynomially using the \textit{greedy algorithm}.

Let $w:V\longrightarrow\mathbb{R}$ be a weight function defined on the
vertices of $G$. For every set $S\subseteq V$, define $w(S)=\Sigma_{s\in
S}w(s)$. Then $G$ is $w$-well-covered if all maximal independent sets of $G$
are of the same weight. The set of weight functions $w$ for which $G$ is
$w$-well-covered is a \textit{vector space} \cite{cer:degree}. We denote that
vector space $WCW(G)$ \cite{bnz:wcc4}. Clearly, $w\in WCW(G)$ if and only if
$G$ is $w$-well-covered. The \textit{dimension} of $WCW(G)$ is denoted by
$wcdim(G)$ \cite{bnz:wcc4}. More recent results about $wcdim$ can be found in
\cite{bv:wcdim1} and \cite{bkmuv:wcdim2}.

The recognition of well-covered graphs is known to be \textbf{co-NP}-complete.
This was proved independently in \cite{cs:note} and \cite{sknryn:compwc}. In
\cite{cst:structures} it is proven that the problem remains \textbf{co-NP}%
-complete even when the input is restricted to $K_{1,4}$-free graphs. However,
the problem is polynomially solvable for $K_{1,3}$-free graphs
\cite{tata:wck13f,tata:wck13fn}, for graphs with girth at least $5$
\cite{fhn:wcg5}, for graphs that contain neither $4$- nor $5$-cycles
\cite{fhn:wc45}, for graphs with a bounded maximal degree \cite{cer:degree},
or for chordal graphs \cite{ptv:chordal}. In \cite{lt:relating} there is a
polynomial characterization of well-covered graphs without cycles of length
$4$ and $6$.

Since recognizing well-covered graphs is \textbf{co-NP}-complete, finding
$WCW(G)$ is \textbf{co-NP}-complete as well. Recently, we developed a
polynomial time algorithm, which returns the vector space of weight functions
$w$ such that the input graph $G\in G(\widehat{C_{4}},\widehat{C_{5}%
},\widehat{C_{6}},\widehat{C_{7}})$ is $w$-well-covered.

\begin{theorem}
\label{wwcc4567} \cite{lt:wc4567} There exists a polynomial time algorithm,
which solves the following problem:\newline Input: A graph $G=(V,E)\in
G(\widehat{C_{4}},\widehat{C_{5}},\widehat{C_{6}},\widehat{C_{7}})$.\newline
Question: Find $WCW(G)$.
\end{theorem}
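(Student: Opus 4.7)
The plan is to exploit the fact that $WCW(G)$ is the kernel of a homogeneous linear system, one equation per generating subgraph, and to show that in $G(\widehat{C_{4}},\widehat{C_{5}},\widehat{C_{6}},\widehat{C_{7}})$ this system can be assembled in polynomial time. First I would observe that the absence of $C_{4}$ forbids $K_{2,2}$ as a (not necessarily induced) subgraph, so every induced complete bipartite subgraph $B$ of $G$ satisfies $\min(|B_{X}|,|B_{Y}|)=1$. In other words, every generating subgraph collapses to a generating star $K_{1,k}$, producing a constraint of the form $w(v)=w(L)$, where $v$ is the centre and $L\subseteq N(v)$ is the (independent) set of leaves.

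Next I would enumerate the candidate generating stars. The naive approach would iterate over all independent subsets $L\subseteq N(v)$, which can be exponential, so the key step is to show that in the absence of $C_{5}$, $C_{6}$ and $C_{7}$ it suffices to test a polynomial family of candidates. Here I would invoke the polynomial recognition algorithm for generating subgraphs stated in the abstract of the present paper, which applies precisely to graphs in $G(\widehat{C_{5}},\widehat{C_{6}},\widehat{C_{7}})$; combined with the star-reduction above, this yields a polynomial enumeration of all generating subgraphs of $G$, each together with the explicit linear constraint it produces.

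Having assembled the constraints, I would stack them into a matrix acting on $\mathbb{R}^{V}$ and return its null space as $WCW(G)$. The forward direction of correctness, namely that every $w\in WCW(G)$ satisfies each such constraint, is immediate: if $B$ is generating with witness $S$, then $S\cup B_{X}$ and $S\cup B_{Y}$ are both maximal, so $w(S\cup B_{X})=w(S\cup B_{Y})$ forces $w(B_{X})=w(B_{Y})$.

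The main obstacle will be the converse, namely that the generating-star constraints alone suffice to force $w$-well-coveredness in $G(\widehat{C_{4}},\widehat{C_{5}},\widehat{C_{6}},\widehat{C_{7}})$. This is not a formal manipulation: one must show that any two maximal independent sets $M_{1}$ and $M_{2}$ can be linked by a sequence of local exchanges, each witnessed by a generating star, so that $w(M_{1})=w(M_{2})$ follows by telescoping. The short-cycle hypotheses do essential work here, ruling out the pathological configurations in the symmetric difference $M_{1}\triangle M_{2}$ that would otherwise obstruct the exchange argument, and this structural analysis is where the bulk of the technical effort is concentrated.
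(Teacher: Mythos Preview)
Your proposal has two genuine gaps, and the paper takes an entirely different route.

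First, the enumeration step does not go through as stated. You are right that the absence of $C_{4}$ forces every generating subgraph to be a star $K_{1,k}$, but the candidate leaf-sets are precisely the independent subsets of $N(v)$. In a $C_{4}$-free graph $G[N(v)]$ is a disjoint union of cliques $Q_{1},\ldots,Q_{r}$, so the number of such subsets is $\prod_{i}(|Q_{i}|+1)$, exponential in $r$. The recognition result you invoke (Theorem~\ref{generating567}) decides whether a \emph{given} $B$ is generating; it does not hand you a polynomial list of all generating stars, nor a polynomial spanning set for the constraints they produce. Your sentence ``the key step is to show that \ldots\ it suffices to test a polynomial family of candidates'' is exactly the missing lemma, not something you can invoke. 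Second, the converse you flag as ``the main obstacle'' is left entirely open: you sketch an exchange argument between two maximal independent sets but supply none of the structural analysis, and it is not a priori clear that satisfying all generating-subgraph constraints is sufficient for $w$-well-coveredness even in this class.

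The paper does not prove Theorem~\ref{wwcc4567} here (it is quoted from \cite{lt:wc4567}), but it proves the strictly stronger Theorem~\ref{wwcc456}, and the method bypasses both of your obstacles. Rather than enumerating generating subgraphs, it writes down for each $v\in V\setminus L(G)$ a \emph{single} constraint $w(v)=w(M(v))$, where $M(v)$ is a maximal independent set of $D(v)=N(v)\setminus N(N_{2}(v))$. Sufficiency (Lemma~\ref{ifflnephi}) is not obtained by an exchange argument at all: one observes that $D(v)\subseteq L(G)$ consists of simplicial vertices and applies Theorem~\ref{pvs}, which says directly that any weight function determined in this way from the simplicial vertices already lies in $WCW(G)$. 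The residual case $L(G)=\emptyset$ is handled by showing all edges are relating (Lemma~\ref{lphirelating}) and then appealing to the Finbow--Hartnell--Nowakowski characterization (Theorem~\ref{wcc45}). So the paper's argument is structural and per-vertex, not constraint-enumerative, and its sufficiency proof rests on simplicial structure rather than on linking maximal independent sets.
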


In order to prove Theorem \ref{wwcc4567}, the following notion has been
introduced in \cite{lt:wc4567}. Let $B$ be an induced complete bipartite
subgraph of $G$ on vertex sets of bipartition $B_{X}$ and $B_{Y}$. Assume that
there exists an independent set $S$ such that each of $S\cup B_{X}$ and $S\cup
B_{Y}$ is a maximal independent set of $G$. Then $B$ is a \textit{generating}
subgraph of $G$, and it \textit{produces} the restriction: $w(B_{X})=w(B_{Y}%
)$. Every weight function $w$ such that $G$ is $w$-well-covered must
\textit{satisfy} the restriction $w(B_{X})=w(B_{Y})$. The set $S$ is a
\textit{witness} that $B$ is generating. In the restricted case that the
generating subgraph $B$ is isomorphic to $K_{1,1}$, call its vertices $x$ and
$y$. In that case $xy$ is a \textit{relating} edge, and $w(x)=w(y)$ for every
weight function $w$ such that $G$ is $w$-well-covered. The decision problem
whether an edge in an input graph is relating is \textbf{NP-}complete
\cite{bnz:wcc4}. Therefore, recognizing generating subgraphs is \textbf{NP-}%
complete as well. However, recognizing relating edges can be done polynomially
if the input graph is restricted to $G(\widehat{C_{4}},\widehat{C_{6}})$
\cite{lt:relating}, and recognizing generating subgraphs is a polynomial
problem when the input graph is restricted to $G(\widehat{C_{4}}%
,\widehat{C_{6}},\widehat{C_{7}})$ \cite{lt:wc4567}.

In Section 2 we consider some general properties of $WCW(G)$. In Section 3 we
analyze the structure of $WCW(G)$ for graphs without cycles of length $5$. In
Section 4 we characterize polynomially relating edges in graphs without cycles
of length $5$ and $6$. In Section 5 we characterize polynomially generating
subgraphs in graphs without cycles of length $5$, $6$ and $7$. In Section 6 we
improve on Theorem \ref{wwcc4567} by presenting a polynomial algorithm which
solves the following problem:\newline\textit{Input:} A graph $G=(V,E)\in
G(\widehat{C_{4}},\widehat{C_{5}},\widehat{C_{6}})$.\newline\textit{Question:}
Find $WCW(G)$.

\section{The Vector Space $WCW(G)$}

\subsection{A Subspace of $WCW(G)$}

In this subsection we describe a procedure, which receives as its input a
graph $G=(V,E)$, and returns a vector space of weight functions
$w:V\longrightarrow\mathbb{R}$ such that $G$ is $w$-well-covered. This space
is a subspace of $WCW(G)$.

Recall that a vertex $v\in V$ is \textit{simplicial} if $N[v]$ is a complete graph.

\begin{theorem}
\label{pvs} Let $S$ be the set of all simplicial vertices in $G=(V,E)$, and
$A=\{a_{1},...,a_{k}\}$ be a maximal independent set of $G[S]$. Define a
weight function $w:V\longrightarrow\mathbb{R}$ as follows:

\begin{itemize}
\item for every $1\leq i\leq k$ choose an arbitrary value for $w(a_{i})$;

\item for every $v\in V\setminus A$ define $w(v)=w(N(v)\cap A)$.
\end{itemize}

Then $G$ is $w$-well-covered.
\end{theorem}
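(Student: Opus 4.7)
The plan is to fix an arbitrary maximal independent set $I$ of $G$ and show that $w(I)=\sum_{i=1}^{k}w(a_{i})$; since the right-hand side depends only on the chosen values $w(a_{1}),\ldots,w(a_{k})$ and not on $I$, this will imply that $G$ is $w$-well-covered.

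First I would exploit simpliciality. For every $a\in A$, the set $N[a]$ is a clique, so $|I\cap N[a]|\le 1$; and since $I$ is maximal it is dominating, so $I\cap N[a]\neq\emptyset$. Hence each $a\in A$ has a unique ``representative'' in $I$, namely the single vertex of $I\cap N[a]$.

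Next I would split
\[
w(I)=\sum_{a\in A\cap I}w(a)\;+\;\sum_{v\in I\setminus A}w(v),
\]
expand each $w(v)$ for $v\in I\setminus A$ via $w(v)=\sum_{a\in N(v)\cap A}w(a)$, and interchange the order of summation to rewrite the total as $\sum_{a\in A}c_{a}\,w(a)$, where $c_{a}=1$ if $a\in I$ and $c_{a}=|\{v\in I\setminus A:v\in N(a)\}|$ if $a\notin I$. It remains to check that $c_{a}=1$ for every $a\in A$. The case $a\in I$ is trivial. If $a\notin I$, the first step supplies a unique $v\in I\cap N[a]$, and $a\notin I$ forces $v\in N(a)$; since $A$ is independent and $a\in A$, this $v$ cannot itself lie in $A$, so $v\in I\setminus A$ and $c_{a}=1$. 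Substituting yields $w(I)=\sum_{a\in A}w(a)$, as required.

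I do not expect a serious obstacle: the argument is essentially a bookkeeping exercise driven by the clique structure of $N[a]$ for simplicial $a$. The one delicate point is the observation that when $a\in A\setminus I$ the unique neighbor of $a$ picked out by $I$ lies outside $A$, since otherwise $w(a)$ would be charged twice; this is exactly where the independence of $A$ is used. Note that the maximality of $A$ in $G[S]$ plays no role in the well-coveredness argument itself, and presumably enters only later when one wants to show that this construction exhausts as large a subspace of $WCW(G)$ as possible.
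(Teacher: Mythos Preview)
Your proof is correct and follows essentially the same approach as the paper: the key observation is that for each simplicial $a_i\in A$, a maximal independent set $X$ satisfies $|N[a_i]\cap X|=1$, from which $w(X)=\sum_i w(a_i)$ follows. The paper states this implication in a single line, while you spell out the double-counting that justifies it (including the use of the independence of $A$ to rule out the representative of $a_i$ lying in $A$); your side remark that the maximality of $A$ in $G[S]$ is not needed for the well-coveredness claim itself is also accurate.
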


\begin{proof}
Let $X$ be a maximal independent set of $G$. Since $a_{i}$ is simplicial,
$\left\vert N\left[  a_{i}\right]  \cap X\right\vert =1$ for every $1\leq
i\leq k$. Therefore, $w(X)=\sum\limits_{1\leq i\leq k}w(a_{i})=w(A)$.
\end{proof}

Consequently, the collection of all weight functions obtained in accordance
with Theorem \ref{pvs} is a subspace of $WCW(G)$.

\subsection{The Dimension of $WCW(G)$}

In this subsection we present for each integer $k$, a family of graphs $\{
C_{m,k,r} \ | \ m \geq k \geq1, r \geq1 \}$ with the following properties:

\begin{enumerate}
\item $wcdim(C_{m,k,r})=k$ for each $1\leq k\leq m$ and for each $r\geq1$.

\item The size of $C_{m,k,r}$ limits to infinity when $m$ goes to infinity.
\end{enumerate}

Denote the vertices of the cycle $C_{m}$ by $v_{1},...,v_{m}$. The graph
$C_{m,k,r}$ is obtained from $C_{m}$ by adding new $k$ disjoint cliques,
$A_{1},...,A_{k}$, each of them is of size $r$. All vertices of $A_{i}$ are
adjacent to $v_{i}$, for each $1 \leq i \leq k$.

For every $1\leq i\leq k$, all vertices of $A_{i}$ are simplicial. A weight
function $w$ defined on the vertices of $C_{m,k,r}$ belongs to $WCW(C_{m,k,r}%
)$ if and only if it satisfies the following conditions.

\begin{enumerate}
\item $w(x)=w(y)$ for each $x,y \in\{v_{i}\} \cup A_{i} $ for each $1 \leq i
\leq k$.

\item $w(v_{i})=0$ for each $k+1 \leq i \leq m$.
\end{enumerate}

If $w\in WCW(C_{m,k,r})$ then the weight of every maximal independent set in
the graph is $\Sigma_{1\leq i\leq k}w(v_{i})$, and $wcdim(C_{m,k,r})=k$.

\begin{figure}[h]
\setlength{\unitlength}{1.0cm} \begin{picture}(20,8)\thicklines
\put(6.5,0.5){\circle*{0.2}}
\put(6.5,5.5){\circle*{0.2}}
\put(1,3){\circle*{0.2}}
\put(3.5,0.5){\circle*{0.2}}
\put(3.5,5.5){\circle*{0.2}}
\put(9,3){\circle*{0.2}}
\put(3.5,5.2){\makebox(0,0){$v_{1}$}}
\put(6.5,5.2){\makebox(0,0){$v_{2}$}}
\put(9,2.7){\makebox(0,0){$v_{3}$}}
\put(6.5,0.2){\makebox(0,0){$v_{4}$}}
\put(3.5,0.2){\makebox(0,0){$v_{5}$}}
\put(1,2.7){\makebox(0,0){$v_{6}$}}
\put(3.5,5.5){\line(1,0){3}}
\put(3.5,0.5){\line(1,0){3}}
\put(1,3){\line(1,1){2.5}}
\put(1,3){\line(1,-1){2.5}}
\put(9,3){\line(-1,1){2.5}}
\put(9,3){\line(-1,-1){2.5}}
\put(4,6.5){\circle*{0.2}}
\put(3,6.5){\circle*{0.2}}
\put(4,7.5){\circle*{0.2}}
\put(3,7.5){\circle*{0.2}}
\put(3.5,5.5){\line(1,2){0.5}}
\put(3.5,5.5){\line(-1,2){0.5}}
\put(3.5,5.5){\line(1,4){0.5}}
\put(3.5,5.5){\line(-1,4){0.5}}
\put(3,6.5){\line(0,1){1}}
\put(3,6.5){\line(1,0){1}}
\put(3,6.5){\line(1,1){1}}
\put(4,7.5){\line(0,-1){1}}
\put(4,7.5){\line(-1,0){1}}
\put(4,6.5){\line(-1,1){1}}
\put(2.7,7){\makebox(0,0){$A_{1}$}}
\put(7,6.5){\circle*{0.2}}
\put(6,6.5){\circle*{0.2}}
\put(7,7.5){\circle*{0.2}}
\put(6,7.5){\circle*{0.2}}
\put(6.5,5.5){\line(1,2){0.5}}
\put(6.5,5.5){\line(-1,2){0.5}}
\put(6.5,5.5){\line(1,4){0.5}}
\put(6.5,5.5){\line(-1,4){0.5}}
\put(6,6.5){\line(0,1){1}}
\put(6,6.5){\line(1,0){1}}
\put(6,6.5){\line(1,1){1}}
\put(7,7.5){\line(0,-1){1}}
\put(7,7.5){\line(-1,0){1}}
\put(7,6.5){\line(-1,1){1}}
\put(5.7,7){\makebox(0,0){$A_{2}$}}
\put(9.5,4){\circle*{0.2}}
\put(8.5,4){\circle*{0.2}}
\put(9.5,5){\circle*{0.2}}
\put(8.5,5){\circle*{0.2}}
\put(9,3){\line(1,2){0.5}}
\put(9,3){\line(-1,2){0.5}}
\put(9,3){\line(1,4){0.5}}
\put(9,3){\line(-1,4){0.5}}
\put(8.5,4){\line(0,1){1}}
\put(8.5,4){\line(1,0){1}}
\put(8.5,4){\line(1,1){1}}
\put(9.5,5){\line(0,-1){1}}
\put(9.5,5){\line(-1,0){1}}
\put(9.5,4){\line(-1,1){1}}
\put(8.2,4.5){\makebox(0,0){$A_{3}$}}
\end{picture}
\caption{The graph $C_{6,3,4}$.}%
\label{c634}%
\end{figure}
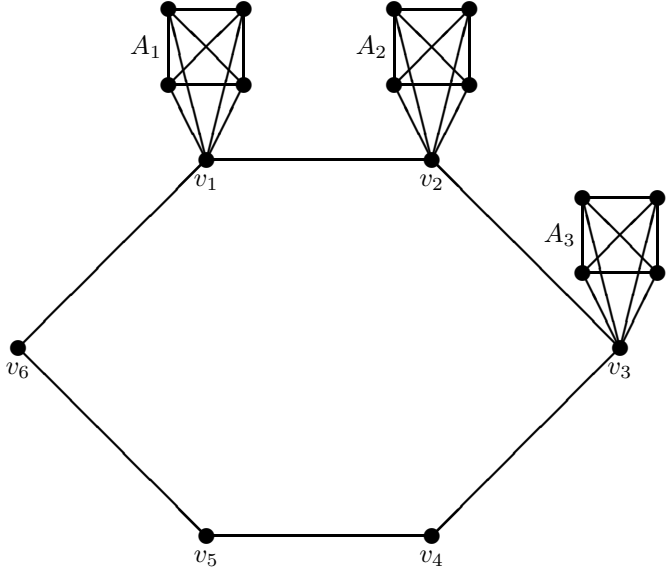

\section{Graphs Without Cycles of Length $5$}

Let $G = (V,E) \in G(\widehat{C_{5}})$, and let $w:V \longrightarrow
\mathbb{R}$. In this section we find a necessary condition that $G$ is $w$-well-covered.

Define $L(G)$ to be the set of all vertices $v \in V$ such that one of the
following holds:

\begin{enumerate}
\item $d(v)=1$.

\item $d(v)=2$ and $v$ is on a triangle.
\end{enumerate}

For every $v \in V$ define $D(v) = N(v) \setminus N(N_{2}(v))$, and let $M(v)$
be a maximal independent set of $D(v)$.

The fact that $G \in G(\widehat{C_{5}})$ implies that for every $v \in V$, the
subgraph induced by $D(v)$ can not contain a path of length 3. Therefore,
every connected component of $D(v)$ is either a $K_{3}$ or a star. ($K_{1}$
and $K_{2}$ are restricted cases of a star.)

\begin{lemma}
\label{dc5} Let $G=(V,E)\in G(\widehat{C_{5}})$, and let $v\in V$. Then every
maximal independent set of $N_{2}(v)$ dominates $N(v)\cap N(N_{2}(v))$.
\end{lemma}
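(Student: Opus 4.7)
The plan is to argue by contradiction: suppose there is a vertex $u \in N(v) \cap N(N_{2}(v))$ not dominated by $T$, where $T$ is a given maximal independent set of $N_{2}(v)$. Since $u \in N(v)$ and $T \subseteq N_{2}(v)$, we have $u \notin T$, so the assumption that $T$ fails to dominate $u$ means that $u$ has no neighbor in $T$.

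Next, I would use the hypothesis $u \in N(N_{2}(v))$ to select some $w \in N_{2}(v)$ adjacent to $u$. Because $T$ is a maximal independent set of $N_{2}(v)$, we have two sub-cases: either $w \in T$, which instantly contradicts the fact that $u$ has no neighbor in $T$; or $w \notin T$, in which case $w$ has some neighbor $w' \in T$ with $w' \in N_{2}(v)$. In the second sub-case, since $w' \in N_{2}(v)$, there exists $u' \in N(v)$ adjacent to $w'$. The remaining work is to show that the vertices $v, u, w, w', u'$ are five distinct vertices lying on a $5$-cycle in $G$.

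The distinctness is the step that requires care, but it is forced by the definitions: $v \notin N(v) \cup N_{2}(v)$ separates $v$ from the other four; the sets $N(v)$ and $N_{2}(v)$ are disjoint, which separates $\{u, u'\}$ from $\{w, w'\}$; $w \neq w'$ because $w \notin T$ while $w' \in T$; and $u \neq u'$ because $u$ is adjacent to $w'$ if and only if $u$ is dominated by $T$, contrary to our assumption (so $u \neq u'$, since $u' \sim w'$). The five edges $vu$, $uw$, $ww'$, $w'u'$, $u'v$ all hold by construction, exhibiting a (not necessarily induced) $C_{5}$ subgraph. This contradicts $G \in G(\widehat{C_{5}})$, completing the argument.

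The main obstacle, and the only non-mechanical point, is the verification that $u \neq u'$; this is where the non-domination assumption is used a second time, and it is what makes the $C_{5}$ genuine rather than collapsing to a shorter cycle. Everything else is bookkeeping with the layered neighborhood of $v$ and the maximality of $T$.
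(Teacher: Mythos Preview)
Your proposal is correct and follows essentially the same argument as the paper: pick an undominated $u\in N(v)\cap N(N_2(v))$, a neighbor of $u$ in $N_2(v)$, a neighbor of that vertex in $T$ (by maximality), and finally a neighbor of the latter in $N(v)$, yielding a $C_5$. Your write-up is in fact more careful than the paper's, which asserts the five vertices form a $5$-cycle without explicitly checking distinctness; your observation that $u\neq u'$ follows from the non-domination assumption fills exactly the gap the paper leaves implicit.
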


\begin{proof}
Let $v\in V$, and let $T$ be a maximal independent set of $N_{2}(v)$. Assume
on the contrary that $T$ does not dominate $N(v)\cap N(N_{2}(v))$. Let
$u\in(N(v)\cap N(N_{2}(v)))\setminus N(T)$, and let $u^{\prime}\in N(u)\cap
N_{2}(v)$. Clearly, $u^{\prime}\not \in T$ but $u^{\prime}$ is adjacent to a
vertex $t\in T$. The fact that $t\in N_{2}(v)$ implies that there exists a
vertex $x\in N(t)\cap N(v)$. Hence, $(v,u,u^{\prime},t,x)$ is a cycle of
length 5, which is a contradiction. Therefore, every maximal independent set
of $N_{2}(v)$ dominates $N(v)\cap N(N_{2}(v))$.
\end{proof}

\begin{corollary}
\label{dc5c} Let $G=(V,E)\in G(\widehat{C_{5}})$, and let $v\in V$. If
$D(v)=\emptyset$ then every maximal independent set of $N_{2}(v)$ dominates
$N(v)$.
\end{corollary}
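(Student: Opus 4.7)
The plan is to deduce this directly from Lemma \ref{dc5}, since the hypothesis $D(v)=\emptyset$ is exactly what is needed to upgrade the conclusion of that lemma from ``dominates $N(v)\cap N(N_2(v))$'' to ``dominates $N(v)$''.

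First I would unpack the definition: $D(v)=N(v)\setminus N(N_2(v))$, so the assumption $D(v)=\emptyset$ says precisely that $N(v)\subseteq N(N_2(v))$, which in turn gives the set equality $N(v)\cap N(N_2(v))=N(v)$. Next, I would invoke Lemma \ref{dc5} applied to the same vertex $v$: any maximal independent set $T$ of $N_2(v)$ satisfies $N(v)\cap N(N_2(v))\subseteq N[T]$. Combining these two observations yields $N(v)\subseteq N[T]$, which is exactly the statement that $T$ dominates $N(v)$.

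There is essentially no obstacle here; the corollary is a one-line specialization of Lemma \ref{dc5}, and the $C_5$-freeness has already been used inside that lemma, so it does not need to be re-invoked. The only thing to be careful about is the harmless edge case in which $N_2(v)=\emptyset$, where the empty set is the unique maximal independent set and the claim holds vacuously because then $D(v)=N(v)$, so the hypothesis $D(v)=\emptyset$ forces $N(v)=\emptyset$ as well.
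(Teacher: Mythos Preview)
Your proposal is correct and follows essentially the same approach as the paper's own proof: observe that $D(v)=\emptyset$ forces $N(v)\cap N(N_2(v))=N(v)$, and then apply Lemma~\ref{dc5}. The additional remark about the edge case $N_2(v)=\emptyset$ is accurate but not needed for the argument.
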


\begin{proof}
If $D(v)=\emptyset$, then $N(v)\cap N(N_{2}(v))=N(v)$, and, consequently, by
Lemma \ref{dc5}, every maximal independent set of $N_{2}(v)$ dominates $N(v)$.
\end{proof}

\begin{theorem}
\label{wnlc5} Assume that $G=(V,E)\in G(\widehat{C_{5}})$ is $w$-well-covered
for some weight function $w:V\longrightarrow\mathbb{R}$. Then $D(v)\neq
\emptyset$ implies $w(v)=w(M(v))$ for every $v\in V\setminus L(G)$.
\end{theorem}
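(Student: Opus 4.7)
The plan is to exhibit two maximal independent sets $I_{1}$ and $I_{2}$ of $G$ whose symmetric difference is exactly $\{v\}\,\triangle\, M(v)$. Once this is done, $w$-well-coveredness forces $w(I_{1})=w(I_{2})$, and after cancelling the common part the identity $w(v)=w(M(v))$ drops out.

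First I would pick a maximal independent set $S$ of the subgraph $G[N_{2}(v)]$ (with $S=\emptyset$ if $N_{2}(v)=\emptyset$), and then extend $S$ to a maximal independent set $T$ of $G-N[v]$. Since no vertex of $N_{2}(v)\setminus S$ can be added to $S$ without violating maximality of $S$, the extension step can only bring in vertices at distance at least $3$ from $v$; hence $T\cap N_{2}(v)=S$, so $T\cap N_{2}(v)$ is still a maximal independent set of $N_{2}(v)$. By Lemma \ref{dc5}, $T$ therefore dominates $N(v)\cap N(N_{2}(v))=N(v)\setminus D(v)$.

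Now set
\[
I_{1}:=T\cup\{v\},\qquad I_{2}:=T\cup M(v).
\]
Independence of $I_{1}$ is immediate from $T\subseteq V\setminus N[v]$. For $I_{2}$, note that $M(v)\subseteq D(v)$ is by definition nonadjacent to $N_{2}(v)\supseteq S$, while a vertex of $T$ at distance at least $3$ from $v$ cannot be adjacent to any vertex of $N(v)\supseteq M(v)$; hence $M(v)$ is nonadjacent to $T$, and $M(v)\cap T=\emptyset$ because $M(v)\subseteq N(v)$. For maximality of $I_{1}$ the vertex $v$ dominates $N(v)$ and the rest of $G-N[v]$ is covered by maximality of $T$. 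For maximality of $I_{2}$: $v$ is dominated by $M(v)$, which is nonempty because $D(v)\neq\emptyset$; any vertex of $D(v)\setminus M(v)$ is dominated by $M(v)$ via its maximality in $D(v)$; vertices of $N(v)\setminus D(v)$ are dominated by $T$ by the preceding paragraph; and all remaining vertices outside $N[v]$ are dominated by $T$ by its own maximality.

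Applying the $w$-well-covered hypothesis, $w(T)+w(v)=w(I_{1})=w(I_{2})=w(T)+w(M(v))$, and cancellation gives $w(v)=w(M(v))$. The main obstacle I expect is verifying the maximality of $I_{2}$: the delicate step is showing that $N(v)\setminus D(v)$ is already dominated by $T$, and this is precisely where the absence of $C_{5}$ enters through Lemma \ref{dc5}. It is crucial that $T$ restricts to a \emph{maximal} independent set of $N_{2}(v)$, which is why I would seed the extension with $S$ rather than extending directly. The hypothesis $v\notin L(G)$ does not appear to be consumed by this argument; the authors presumably exclude $L(G)$ because those vertices are already handled by the simplicial construction of Theorem \ref{pvs}.
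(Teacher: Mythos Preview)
Your argument is correct and follows exactly the same strategy as the paper's proof: start from a maximal independent set of $N_{2}(v)$, extend it to a maximal independent set of $G\setminus N[v]$, and compare the two completions by $\{v\}$ and by $M(v)$ (the paper names the sets $T$ and $S$ where you use $S$ and $T$). The only difference is that you spell out, via Lemma~\ref{dc5}, the verification that $T\cup M(v)$ is maximal and independent, which the paper leaves implicit; your observation that $v\notin L(G)$ is not actually consumed by the argument is also accurate.
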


\begin{proof}
Let $v\in V\setminus L(G)$, let $T$ be a maximal independent set of $N_{2}%
(v)$, and let $S$ be a maximal independent set of $G\setminus N\left[
v\right]  $, which contains $T$. Then $S\cup\{v\}$ and $S\cup M(v)$ are two
maximal independent sets of $G$. The fact that $G$ is $w$-well-covered implies
that $w(S\cup\{v\})=w(S\cup M(v))$. Therefore, $w(v)=w(M(v))$.
\end{proof}

\section{Relating Edges in Graphs Without Cycles of Length 5 and 6}

In this section we prove that recognizing relating edges in an input graph,
which does not contain cycles of length $5$ and $6$, is a polynomial problem.

\begin{theorem}
\label{c456related} Let $G=(V,E)\in G(\widehat{C_{5}},\widehat{C_{6}})$ and
let $xy\in E$. Then $xy$ is relating if and only if $N_{2}(\{x,y\})$ dominates
$N(x)\bigtriangleup N(y)$.
\end{theorem}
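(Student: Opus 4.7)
The plan is to prove the two implications separately; the forward direction is essentially immediate, while the converse requires the $C_5,C_6$-free hypothesis in an essential way.

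For the forward implication, suppose $xy$ is relating with witness $S$, so $S$ is independent, disjoint from $N[\{x,y\}]$, and both $S\cup\{x\}$ and $S\cup\{y\}$ are maximal independent sets. Pick any $u\in(N(x)\bigtriangleup N(y))\setminus\{x,y\}$ and assume without loss of generality $u\in N(x)\setminus N(y)$. Then $u\notin S\cup\{y\}$ and $u\not\sim y$, so maximality of $S\cup\{y\}$ forces some $s\in S$ with $s\sim u$. Since $S\cap N[\{x,y\}]=\emptyset$ while $u\in N(\{x,y\})$, necessarily $s\in N_2(\{x,y\})$, which exhibits $u$ as dominated by $N_2(\{x,y\})$.

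For the converse I would take a maximal independent set $T$ of the induced subgraph $G[N_2(\{x,y\})]$ and extend it to a maximal independent set $S$ of $G[V\setminus N[\{x,y\}]]$. Then $S\cup\{x\}$ and $S\cup\{y\}$ are automatically independent and cover all vertices outside $N[\{x,y\}]$ (by maximality of $S$ in that subgraph) as well as those in $N(x)\cap N(y)$ (covered by $x$ or $y$); to conclude that these sets are maximal, and hence that $xy$ is relating with witness $S$, it suffices to prove that $S$ dominates every $u\in(N(x)\bigtriangleup N(y))\setminus\{x,y\}$.

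Proving this last claim is the main obstacle. Fix such a $u$, let $v\in\{x,y\}$ be the unique neighbor of $u$ in $\{x,y\}$, and let $w$ be the other endpoint. By hypothesis $u$ has a neighbor $z\in N_2(\{x,y\})$; if $z\in T$ the domination of $u$ is immediate, so assume $z\notin T$ and let $t\in T$ be a neighbor of $z$ guaranteed by maximality of $T$. The task reduces to showing $t\sim u$. Since $t\in N_2(\{x,y\})$, there exists $u'\in N(\{x,y\})$ with $t\sim u'$. Supposing toward contradiction that $u'\neq u$, I would split into two subcases: if $u'\in N(v)$, then $v,u,z,t,u',v$ is a $5$-cycle; if $u'\in N(w)\setminus N(v)$, then $v,u,z,t,u',w,v$ is a $6$-cycle. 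In both cases the distinctness of the listed vertices follows from their distances $0,1,2,2,1,0$ to $\{x,y\}$ together with $u'\neq u$ and the simplicity of $G$. Either outcome contradicts $G\in G(\widehat{C_5},\widehat{C_6})$, so $u'=u$, whence $t\sim u$ and the claim is proved.
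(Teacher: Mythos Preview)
Your proof is correct. Both directions are sound: the forward direction matches the paper's one-line argument, and in the converse your case analysis genuinely produces a $C_5$ or a $C_6$ on distinct vertices, as you verify via the distance pattern.

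The converse argument, however, follows a different route from the paper's. The paper builds the witness constructively: for each $x'\in N(x)$ it picks a specific dominator $x''\in N(x')\cap N_2(x)$, forms $S_x$ from these choices (and $S_y$ analogously), and then uses $C_5$-freeness to show $S_x$ is independent and $C_6$-freeness to show $S_x\cup S_y$ is independent, before extending. You instead start from an arbitrary maximal independent set $T$ of $N_2(\{x,y\})$ and use $C_5$- and $C_6$-freeness in the dual direction, to force $T$ to dominate the symmetric difference. Your approach avoids the mild sloppiness in the paper's construction (which, as written, asks for a vertex in $N(x')\cap N_2(x)$ for \emph{every} $x'\in N(x)$, not just those in the symmetric difference) and is in fact closer in spirit to the paper's own proof of Theorem~\ref{generating567} on generating subgraphs, where maximal independent sets of the second neighborhood are likewise shown to dominate. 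The paper's approach, on the other hand, makes the witness more explicit.
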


\begin{proof}
Assume that $N_{2}(\{x,y\})$ dominates $N(x)\bigtriangleup N(y)$. The
following algorithm returns a witness that $xy$ is relating.

Construct a set $S_{x}$ as follows. For every vertex $x^{\prime}\in N(x)$ add
to $S_{x}$ a vertex $x^{\prime\prime}\in N(x^{\prime})\cap N_{2}(x)$. The set
$S_{x}$ is independent, because if $x_{1}^{\prime\prime}$ and $x_{2}%
^{\prime\prime}$ were two adjacent vertices in $S_{x}$, then there existed two
distinct vertices $x_{1}^{\prime}\in N(x)\cap N(x_{1}^{\prime\prime})$ and
$x_{2}^{\prime}\in N(x)\cap N(x_{2}^{\prime\prime})$. Hence, $(x,x_{1}%
^{\prime},x_{1}^{\prime\prime},x_{2}^{\prime\prime},x_{2}^{\prime})$ was a
cycle of length $5$.

Construct similarly an independent set $S_{y}$ by choosing a vertex
$y^{\prime\prime}\in N(y^{\prime})\cap N_{2}(y)$ for every $y^{\prime}\in
N(y)$. The fact that $G$ does not contain cycles of length $6$ implies that
$S_{x}\cup S_{y}$ is independent too. Let $S$ be a maximal independent set of
$G\setminus N\left[  \{x,y\}\right]  $ which contains $S_{x}\cup S_{y}$. Then
$S\cup\{x\}$ and $S\cup\{y\}$ are maximal independent sets of $G$. Therefore,
$xy$ is related, and $S$ is the witness.

Assume $xy$ is relating. Let $S$ be a witness that $xy$ is relating. Then
$S\cap N_{2}(\{x,y\})$ dominates $N(x)\bigtriangleup N(y)$.
\end{proof}

\section{Generating \ Subgraphs \ in \ Graphs \ Without \ Cycles of Lengths
$5$, $6$ and $7$}

In this section we prove that recognizing generating subgraphs in an input
graph, which does not contain cycles of lengths $5,6$ and $7$, is a polynomial problem.

\begin{theorem}
\label{generating567} Let $G\in G(\widehat{C_{5}},\widehat{C_{6}%
},\widehat{C_{7}})$, and let $B$ be an induced complete bipartite subgraph of
$G$ on vertex sets of bipartition $B_{X}$ and $B_{Y}$. Then $B$ is generating
if and only if $N_{2}(B)$ dominates $N(B_{X})\bigtriangleup N(B_{Y})$.
\end{theorem}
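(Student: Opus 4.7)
The plan is to mirror the argument used for relating edges in Theorem~\ref{c456related}, exploiting the exclusion of $C_7$ in addition to $C_5$ and $C_6$ to accommodate arbitrary $B_X,B_Y$ rather than singletons. Both directions are interpreted with the natural restriction $(N(B_X)\triangle N(B_Y))\setminus B$, consistent with the single-edge usage where $x\in N(y)\setminus N(x)$ is implicitly excluded from what $N_2$ can dominate.

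For the forward direction I would take a witness $S$ for $B$, note that independence of $S\cup B_X$ and $S\cup B_Y$ forces $S\cap N[B]=\emptyset$, fix $u\in(N(B_X)\setminus N(B_Y))\setminus B$, and use maximality of $S\cup B_Y$ to produce a neighbor of $u$ in $S\cup B_Y$; since $u\notin N(B_Y)$ that neighbor lies in $S$, and since it is disjoint from $N[B]$ but adjacent to $u\in N(B)$, it lies in $N_2(B)$. The symmetric argument handles $u\in(N(B_Y)\setminus N(B_X))\setminus B$.

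For the reverse direction, assuming the domination hypothesis, I would build a witness explicitly. For every $u\in N(B_X)\setminus(N(B_Y)\cup B)$ pick some $u''\in N(u)\cap N_2(B)$ (non-empty by hypothesis), collect these into $S_X$, and construct $S_Y$ analogously from $N(B_Y)\setminus(N(B_X)\cup B)$. Then extend $S_X\cup S_Y$ to a maximal independent set $S$ of $G[V\setminus N[B]]$ and verify that $S\cup B_X$ and $S\cup B_Y$ are both maximal independent in $G$. Maximality, once independence of $S$ is granted, is a routine case check on $v\notin S\cup B_X$: if $v\in B_Y\cup(N(B_X)\setminus B)$ then $v$ is dominated by $B_X$; if $v\in N(B_Y)\setminus(N(B_X)\cup B)$ then $v$ is dominated by its chosen $v''\in S_Y\subseteq S$; and if $v\notin N[B]$ then $v$ is dominated by $S$ by maximality in $G[V\setminus N[B]]$.

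The crux is that $S_X\cup S_Y$ is independent, and this is where all three forbidden cycle lengths are needed. Suppose $s_1\neq s_2$ in $S_X\cup S_Y$ are adjacent, arising from preimages $u_1,u_2$ with witness neighbors $b_1,b_2\in B$. If $s_1,s_2\in S_X$ with $b_1,b_2\in B_X$, then $b_1=b_2$ yields the $C_5$ $\,b_1 u_1 s_1 s_2 u_2 b_1$ and $b_1\neq b_2$ yields the $C_7$ $\,b_1 u_1 s_1 s_2 u_2 b_2 b' b_1$ for any $b'\in B_Y$; the $S_Y$ case is symmetric. If $s_1\in S_X,\,s_2\in S_Y$ with $b_1\in B_X,\,b_2\in B_Y$, the edge $b_1b_2\in E$ closes the $C_6$ $\,b_1 u_1 s_1 s_2 u_2 b_2 b_1$. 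In each subcase, pairwise distinctness of the listed vertices comes from $S_X\cup S_Y\subseteq N_2(B)$ being disjoint from $N[B]$, from $u_1\neq u_2$ (forced either by the single-choice rule for each $u$ or by the mutually exclusive defining conditions of $S_X$ and $S_Y$), and from $b_1\neq b_2$ in the one subcase where that is explicitly assumed. The main obstacle is this distinctness bookkeeping in the forbidden-cycle arguments; once it is dispatched, the rest of the proof is routine.
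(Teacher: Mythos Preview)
Your proof is correct, and the forward direction matches the paper's. In the reverse direction you take a genuinely different construction than the paper. You follow the template of Theorem~\ref{c456related}: for each $u$ in the symmetric difference, \emph{choose} one dominator $u''\in N(u)\cap N_2(B)$, collect these into $S_X\cup S_Y$, and then spend all three forbidden-cycle hypotheses on proving that this chosen set is independent (the $S_X$--$S_X$ case gives $C_5$ or $C_7$, the mixed case gives $C_6$); domination is then free by construction. The paper instead lets $S_X$ be a \emph{maximal independent set} of $N_2(B_X)\cap N_3(B_Y)$ (and symmetrically $S_Y$), so independence of each piece is free and only the $C_6$ hypothesis is needed to rule out $S_X$--$S_Y$ edges; the price is that domination of $N(B_X)\cap N_2(B_Y)$ by $S$ must then be argued separately, via a contradiction that produces a $C_5$ when two relevant $B_X$-neighbours coincide and a $C_7$ (through a $B_Y$-vertex) when they differ. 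Both routes work; yours keeps the analogy with the single-edge case tight and makes the witness explicit, while the paper's route localises the role of each excluded cycle length more cleanly (independence from $\widehat{C_6}$, domination from $\widehat{C_5}$ and $\widehat{C_7}$). Your remark that the symmetric difference must be read modulo $B$ is well taken and is tacitly assumed in the paper as well.
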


\begin{proof}
Assume that $B$ is generating. Let $S$ be a witness of $B$. Then $S\cap
N_{2}(B)$ dominates $N(B_{X})\bigtriangleup N(B_{Y})$, therefore $N_{2}(B)$
dominates $N(B_{X})\bigtriangleup N(B_{Y})$.

Suppose that $N_{2}(B)$ dominates $N(B_{X})\bigtriangleup N(B_{Y})$. Let
$S_{X}$ be a maximal independent set of $N_{2}(B_{X})\cap N_{3}(B_{Y})$, and
let $S_{Y}$ be a maximal independent set of $N_{2}(B_{Y})\cap N_{3}(B_{X})$.
The fact that $G$ does not contain cycles of length $6$ implies that
$S=S_{X}\cup S_{Y}$ is independent. The fact that $G$ does not contain cycles
of length $5$ implies that there are no edges between $S_{X}$ and
$N(B_{Y})\cap N_{2}(B_{X})$. Similarly, there are no edges between $S_{Y}$ and
$N(B_{X})\cap N_{2}(B_{Y})$.

Assume on the contrary that there exists a vertex $x^{\prime}\in N(B_{X}) \cap
N_{2}(B_{Y})$ which is not dominated by $S$. Clearly, $x^{\prime}$ is adjacent
to a vertex $x^{\prime\prime}\in N_{2}(B_{X}) \cap N_{3}(B_{Y})$. Hence,
$x^{\prime\prime}$ is a neighbor of a vertex $v \in S_{X}$. Clearly, $v$ is
adjacent to a vertex $w \in N(B_{X}) \cap N_{2}(B_{Y})$.

Let $x_{1}$ be a neighbor of $x^{\prime}$ in $B_{X}$, and let $x_{2}$ be a
neighbor of $w$ in $B_{X}$. If $x_{1}=x_{2}$ then $(x_{1},x^{\prime}%
,x^{\prime\prime},v,w)$ is a cycle of length $5$. Otherwise, let $y$ be any
vertex of $B_{Y}$. Then $(x_{1},x^{\prime},x^{\prime\prime},v,w,x_{2},y)$ is a
cycle of length $7$. In both cases we obtained a contradiction. Therefore $S$
dominates $N(B_{X})\cap N_{2}(B_{Y})$. Similarly, $S$ dominates $N(B_{Y})\cap
N_{2}(B_{X})$. Let $S^{\ast}$ be any maximal independent set of $G\setminus
N[B]$ which contains $S$. Then $S^{\ast}$ is a witness that $B$ is generating.
\end{proof}

\section{The Vector \ Space \ of \ Well-Covered \ Graphs Without Cycles of
Lengths $4$, $5$, and $6$}

In this section $G=(V,E)\in G(\widehat{C_{4}},\widehat{C_{5}},\widehat{C_{6}%
})$. Therefore, $L(G)$ is the set of all simplicial vertices of $G$. For each
$v\in V$ every connected component of $N(v)$ is either a $K_{1}$ or a $K_{2}$.
Also, every connected component of $D(v)$ is either a $K_{1}$ or a $K_{2}$.

\subsection{A polynomial characterization of $WCW(G)$.}

\begin{theorem}
\label{wwcc456} Let $G\in G(\widehat{C_{4}},\widehat{C_{5}},\widehat{C_{6}})$.
There exists a polynomial time algorithm which finds $WCW(G)$.
\end{theorem}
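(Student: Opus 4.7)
My plan is to describe $WCW(G)$ as the solution space of a polynomial-size linear system on the weight function $w$, each equation of which is a known necessary condition for $w$-well-coveredness; the algorithm then assembles this system and returns its kernel by Gaussian elimination. The necessary conditions come from three sources: Theorem~\ref{wnlc5} for vertex-based restrictions $w(v)=w(M(v))$, Theorem~\ref{c456related} for relating-edge restrictions $w(x)=w(y)$, and restrictions $w(x)=w(B_{Y})$ from generating stars $K_{1,n}$ with $n\geq 2$. Stars are the only other possible generating subgraphs because $\widehat{C_{4}}$ forbids any induced $K_{m,n}$ with $m,n\geq 2$.

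The equations are enumerated as follows. For every $v\in V\setminus L(G)$ with $D(v)\neq\emptyset$, Theorem~\ref{wnlc5} supplies $w(v)=w(M(v))$ for every maximal independent set $M(v)$ of $D(v)$; since in our class every component of $D(v)$ is $K_{1}$ or $K_{2}$, all such equations are spanned by a single $w(v)=w(M_{0}(v))$ for a fixed $M_{0}(v)$ together with $w(a)=w(b)$ for every edge $ab$ inside a $K_{2}$ component of $D(v)$, yielding polynomially many equations. For each edge $xy$ I apply the polynomial test of Theorem~\ref{c456related} and record $w(x)=w(y)$ whenever it succeeds. For generating stars I would adapt the proof of Theorem~\ref{generating567}: when $|B_{X}|=1$ the branch ``$x_{1}\neq x_{2}$'' of that argument is vacuous on the $S_{X}$-side and only the $5$-cycle case remains, so $\widehat{C_{5}}$ suffices there; on the $S_{Y}$-side the potentially problematic $7$-cycle must be ruled out by different means, exploiting the rigidity of $N(x)$ in $G(\widehat{C_{4}})$---in particular the fact that any two non-adjacent neighbors of $x$ share only $x$ as a common neighbor---to restrict how the witness can fail.

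The main obstacle is twofold. First, enumerating all generating stars is a priori exponential because $N(x)$ can admit exponentially many independent subsets; I expect to bypass this by proving that the restrictions from the third family are spanned by a polynomial-size set of canonical stars per vertex, again using the rigid structure of $N(x)$ in $G(\widehat{C_{4}})$. Second, and more central, is sufficiency: any $w$ satisfying all recorded equations must lie in $WCW(G)$. I would prove this by comparing two arbitrary maximal independent sets $I,I'$ and establishing $w(I)=w(I')$ via a minimum symmetric-difference argument: among all pairs with $w(I)\neq w(I')$, pick one minimizing $|I\bigtriangleup I'|$, observe that $I\bigtriangleup I'$ is then confined to a single local configuration (a simplicial clique, a $D(v)$-configuration, a relating edge, or a generating star), and derive the contradiction from the corresponding recorded equation. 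Once sufficiency is established, the remaining algorithmic tasks are routine and the polynomial-time algorithm follows.
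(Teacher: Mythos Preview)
Your proposal takes a genuinely different route from the paper, and the parts you flag as ``obstacles'' are precisely where the argument is incomplete.

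The paper does \emph{not} attempt to enumerate generating stars or to run a minimum-symmetric-difference induction. Instead it case-splits on whether $L(G)=\emptyset$. The crucial structural observation (stated at the opening of Section~6) is that in $G(\widehat{C_{4}},\widehat{C_{5}},\widehat{C_{6}})$ every vertex of $D(v)$ has degree at most two and lies in $L(G)$; thus $D(v)\subseteq L(G)$ and $L(G)$ coincides with the set of simplicial vertices. When $L(G)\neq\emptyset$, the paper shows (Lemmas~\ref{lc12} and~\ref{ifflnephi}) that the \emph{only} equations needed are $w(v)=w(M(v))$ for all $v\in V\setminus L(G)$, including the degenerate equation $w(v)=0$ when $D(v)=\emptyset$; sufficiency of these equations is immediate from Theorem~\ref{pvs}, because any $w$ satisfying them is literally of the form described there (free values on a maximal independent set of simplicial vertices, and $w(v)=w(N(v)\cap A)$ elsewhere). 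No generating-subgraph enumeration and no exchange argument are needed. When $L(G)=\emptyset$, the paper invokes the Finbow--Hartnell--Nowakowski classification (Theorem~\ref{wcc45}) to conclude that $WCW(G)$ is either the constants (for $C_{7}$ and $T_{10}$) or $\{0\}$.

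Against this, your plan has two real gaps. First, your sufficiency step---``$I\bigtriangleup I'$ is confined to a single local configuration''---is exactly the heart of the matter and is asserted without proof; there is no a priori reason a minimal symmetric difference must localize to one of your listed types, and proving it would require structural work comparable to (and likely rediscovering) Lemmas~\ref{lc12} and~\ref{ifflnephi}. Second, your system as written need not cut $WCW(G)$ down far enough when $L(G)=\emptyset$: relating edges alone only force $w$ to be constant, and whether the constants lie in $WCW(G)$ is precisely the well-coveredness question, which the paper settles by appealing to Theorem~\ref{wcc45}. You would either need that external classification or a guaranteed generating star producing $w\equiv 0$ in every non-well-covered instance---neither of which you supply. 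The polynomial enumeration of ``canonical stars'' is likewise left as an expectation. In short, the paper sidesteps both of your obstacles by exploiting $D(v)\subseteq L(G)$ and Theorem~\ref{pvs}; your route may be salvageable, but as stated it is a plan with its hardest steps unproved.
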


The proof of Theorem \ref{wwcc456} is based on the polynomial characterization
of well-covered graphs without cycles of length $4$ and $5$, found by Finbow,
Hartnell and Nowakowski.

\begin{theorem}
\cite{fhn:wc45} \label{wcc45} Let $H=(V,E)\in G(\widehat{C_{4}},\widehat{C_{5}%
})$. Then $H$ is well-covered if and only if one of the following conditions holds.

\begin{enumerate}
\item There exists a set $\{v_{1},...,v_{k}\} \subseteq V$ of \ simplicial
\ vertices \ such \ that $|N[v_{i}]| \leq3$ for every $1 \leq i \leq k$, and
$\{N[v_{i}] | 1 \leq i \leq k\}$ is a partition of $V$.

\item $H$ is isomorphic to $C_{7}$ or to $T_{10}$.
\end{enumerate}
\end{theorem}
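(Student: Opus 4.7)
The plan is to prove the biconditional by induction on $|V(H)|$. For the backward direction, I would argue directly that conditions (1) and (2) each imply well-coveredness. Under (1), every maximal independent set of $H$ meets each clique $N[v_{i}]$ in exactly one vertex (since $v_{i}$ is simplicial and $|N[v_{i}]|\leq 3$), so every maximal independent set has cardinality exactly $k$. The exceptional cases $C_{7}$ and $T_{10}$ are handled by enumerating their (few) maximal independent sets directly.

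For the forward direction, I would induct on $|V(H)|$, with the inductive engine being the following structural lemma: every well-covered $H\in G(\widehat{C_{4}},\widehat{C_{5}})$ not isomorphic to $C_{7}$ or $T_{10}$ contains a simplicial vertex $v$ with $|N[v]|\leq 3$ such that $H':=H\setminus N[v]$ is still well-covered and still lies in $G(\widehat{C_{4}},\widehat{C_{5}})$. Class membership is automatic since deleting vertices creates no new cycles. Well-coveredness of $H'$ follows because $v$ simplicial forces every maximal independent set of $H$ to contain exactly one vertex of the clique $N[v]$, which induces a cardinality-shifting bijection between the maximal independent sets of $H$ and those of $H'$; hence $H'$ also has constant-cardinality maximal independent sets. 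Applying the induction hypothesis to $H'$ then yields either a simplicial partition $\{N[v'_{i}]\}$ of $V(H')$ --- which extends to the desired partition of $V(H)$ by adjoining $N[v]$ as the first block --- or places $H'$ in $\{C_{7},T_{10}\}$; the latter sub-case must be excluded by showing that attaching any admissible $N[v]$ to $C_{7}$ or $T_{10}$ cannot produce a well-covered graph in $G(\widehat{C_{4}},\widehat{C_{5}})$ other than $C_{7}$ and $T_{10}$ themselves.

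The main obstacle --- and the heart of the Finbow--Hartnell--Nowakowski argument --- is proving the structural lemma. I would begin with a vertex $u$ of minimum degree in $H$, exploit that $H\in G(\widehat{C_{4}})$ prevents any two neighbors of $u$ from sharing a common neighbor other than $u$, and use $H\in G(\widehat{C_{5}})$ to control paths between $N(u)$ and $N_{2}(u)$. Contrasting a maximal independent set that contains $u$ with one extending an arbitrary maximal independent set of $N_{2}(u)$, well-coveredness forces tight numerical identities on neighborhood sizes. The resulting local analysis should either expose a simplicial vertex of degree at most $2$ in $N[u]\cup N_{2}(u)$ --- closing the induction --- or constrain $H$ so severely that its only possibilities are $C_{7}$ and $T_{10}$. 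The technical difficulty is the branching case analysis needed to pin down these few exceptional configurations; this is the content of the original Finbow--Hartnell--Nowakowski paper, and is precisely where any new proof would have to invest its effort.
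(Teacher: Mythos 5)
First, a point of comparison: the paper does not prove this statement at all. It is imported verbatim from Finbow, Hartnell and Nowakowski \cite{fhn:wc45} and used as a black box in the derivation of Theorem \ref{wwcc456}, so there is no in-paper proof for your argument to match; your proposal must stand or fall as a self-contained proof of the FHN characterization. Its easy parts are sound. The backward direction works: since $v_{i}$ is simplicial, all neighbors of $v_{i}$ lie in the clique $N[v_{i}]$, so a maximal independent set missing $N[v_{i}]$ could be enlarged by $v_{i}$; hence every maximal independent set meets each block in exactly one vertex and has cardinality $k$, and $C_{7}$, $T_{10}$ are finite checks. The reduction step ``$H$ well-covered implies $H'=H\setminus N[v]$ well-covered'' is also correct, though your ``bijection'' is really an injection: $T\mapsto T\cup\{v\}$ maps maximal independent sets of $H'$ to maximal independent sets of $H$ (which suffices), whereas the restriction map in the other direction can fail, because for $S$ maximal in $H$ with $S\cap N[v]=\{u\}$, a vertex of $N(u)\setminus N[v]$ may be undominated by $S\setminus\{u\}$ in $H'$.

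The genuine gap is that the forward direction --- the entire content of the theorem --- is asserted rather than proved, and you say so yourself: the structural lemma (every well-covered $H\in G(\widehat{C_{4}},\widehat{C_{5}})$ other than $C_{7}$ and $T_{10}$ contains a simplicial vertex $v$ with $|N[v]|\leq 3$ admitting the reduction) is deferred to the original FHN case analysis, which occupies essentially their whole paper. Moreover, even granting that lemma, your inductive step does not close as stated. If $H'$ satisfies condition (1), the witnesses $v_{i}'$ are simplicial \emph{in $H'$} and the blocks are closed neighborhoods \emph{in $H'$}; a vertex $v_{i}'$ may have neighbors inside the deleted set $N[v]$, in which case $v_{i}'$ need not be simplicial in $H$ and $N_{H}[v_{i}']\neq N_{H'}[v_{i}']$, so adjoining $N[v]$ as an extra block does not automatically produce a partition of $V(H)$ of the required form. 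The choice of $v$ must be coordinated with the edges between $N[v]$ and the rest of the graph precisely so that the partition lifts, and this interplay --- together with the exclusion of the sub-case $H'\in\{C_{7},T_{10}\}$, which you also only name --- is where the real work lives. As it stands, the proposal is a reasonable roadmap for reproving \cite{fhn:wc45}, but not a proof.
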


\begin{figure}[h]
\setlength{\unitlength}{1.0cm} \begin{picture}(20,5)\thicklines
\multiput(1,0.5)(0,1){3}{\circle*{0.2}}
\multiput(3,0.5)(0,1){3}{\circle*{0.2}}
\multiput(2,1.5)(0,1){2}{\circle*{0.2}}
\put(2,1){\circle*{0.2}}
\put(2,3){\circle*{0.2}}
\put(1,0.5){\line(0,1){2}}
\put(1,0.5){\line(1,0){2}}
\put(1,0.5){\line(2,1){1}}
\put(3,0.5){\line(0,1){2}}
\put(2,1){\line(0,1){2}}
\put(2,1){\line(2,-1){1}}
\put(1,2.5){\line(2,1){1}}
\put(2,3){\line(2,-1){1}}
\multiput(8,0.5)(1,0){3}{\circle*{0.2}}
\multiput(7,1.5)(0,1){3}{\circle*{0.2}}
\multiput(8,4.5)(1,0){3}{\circle*{0.2}}
\multiput(11,1.5)(0,1){3}{\circle*{0.2}}
\put(8,0.5){\line(1,0){2}}
\put(7,1.5){\line(0,1){2}}
\put(8,4.5){\line(1,0){2}}
\put(11,1.5){\line(0,1){2}}
\put(7,1.5){\line(1,-1){1}}
\put(7,3.5){\line(1,1){1}}
\put(11,1.5){\line(-1,-1){1}}
\put(11,3.5){\line(-1,1){1}}
\put(7,2.5){\line(1,0){4}}
\put(9,0.5){\line(0,1){4}}
\put(8,4.8){\makebox(0,0){$v_{12}$}}
\put(9,4.8){\makebox(0,0){$v_{1}$}}
\put(10,4.8){\makebox(0,0){$v_{2}$}}
\put(11.3,3.5){\makebox(0,0){$v_{3}$}}
\put(11.3,2.5){\makebox(0,0){$v_{4}$}}
\put(11.3,1.5){\makebox(0,0){$v_{5}$}}
\put(10,0.2){\makebox(0,0){$v_{6}$}}
\put(9,0.2){\makebox(0,0){$v_{7}$}}
\put(8,0.2){\makebox(0,0){$v_{8}$}}
\put(6.6,1.5){\makebox(0,0){$v_{9}$}}
\put(6.6,2.5){\makebox(0,0){$v_{10}$}}
\put(6.6,3.5){\makebox(0,0){$v_{11}$}}
\end{picture}\caption{The graphs $T_{10}$ (left) and $D_{12}$ (right).}%
\label{T10}%
\end{figure}
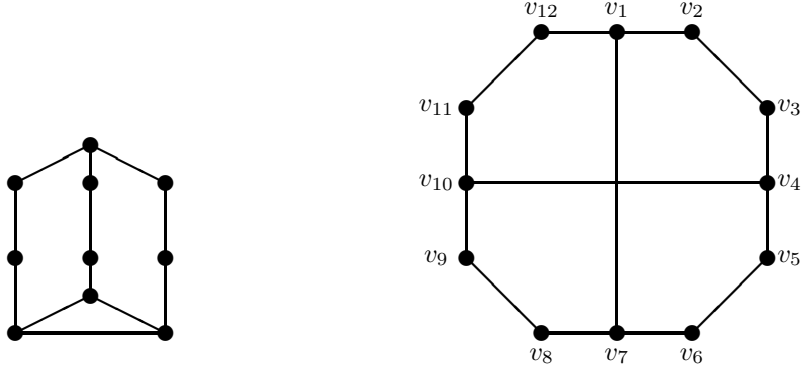

The following lemmas together with Theorem \ref{wcc45} imply Theorem
\ref{wwcc456}.

\begin{lemma}
\label{lc12} Let $w:V\longrightarrow\mathbb{R}$ be a weight function defined
on $G$, and assume that $G$ is $w$-well-covered. Let $v\in V\setminus L(G)$.
If there exists a vertex $u\in N\left[  v\right]  $ such that $D(u)\neq
\emptyset$ then $w(v)=w(M(v))$.
\end{lemma}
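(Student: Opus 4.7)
The plan is to reduce to Theorem \ref{wnlc5}. If $u=v$, or more generally whenever $D(v)\neq\emptyset$, the conclusion $w(v)=w(M(v))$ is precisely Theorem \ref{wnlc5} applied to $v$. The only remaining case, which carries the new content of the lemma, is $u\in N(v)\setminus\{v\}$ with $D(v)=\emptyset$; then $M(v)=\emptyset$, so the goal is to prove $w(v)=0$.

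In this case I would first verify $u\notin L(G)$ by a short case analysis. Since $G$ is $C_{4}$-free a simplicial $u$ has $|N(u)|\le 2$, and both possibilities are ruled out: if $N(u)=\{v\}$, then $D(u)\neq\emptyset$ forces $N(v)=\{u\}$, making $v\in L(G)$; if $N(u)=\{v,z\}$ with $vz\in E$, then $v\in D(u)$ again puts $v\in L(G)$, while $z\in D(u)$ forces $N(u)\subseteq\{u,v\}\cup N[v]$ and in particular $u\in D(v)$, contradicting $D(v)=\emptyset$. Having $u\notin L(G)$ lets me invoke Theorem \ref{wnlc5} for $u$: $w(u)=w(M(u))$ for every maximal independent set $M(u)\subseteq D(u)$.

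The heart of the argument is the construction of two maximal independent sets $A_{1}=S\cup\{u\}$ and $A_{2}=S\cup\{v\}\cup M(u)$ of $G$ sharing a common outer part $S\subseteq V(H):=V\setminus(N[u]\cup N[v])$. Once both are shown to be maximal, $w(A_{1})=w(A_{2})$ gives $w(u)=w(v)+w(M(u))$, which combined with $w(u)=w(M(u))$ yields $w(v)=0$. To build $S$, I first assemble an independent set $Z\cup Z'\subseteq V(H)$ needed inside $S$ for the maximality of $A_{1}$ and $A_{2}$: for each $y\in N(u)\setminus D(u)\setminus N[v]$ pick a neighbor $z_{y}\in N(y)\cap N_{2}(u)$ (which exists because $y\notin D(u)$), and for each $x\in N(v)\setminus N[u]$ pick a neighbor $z_{x}\in N(x)\cap N_{2}(v)$ (which exists by Corollary \ref{dc5c}, since $D(v)=\emptyset$ forces every maximal independent set of $N_{2}(v)$ to dominate $N(v)$). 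Then $Z\cup Z'$ is extended to a maximal independent set $S$ of $G[V(H)]$.

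The main obstacle will be the chain of cycle-free verifications. I expect the $C_{4},C_{5},C_{6}$-free hypotheses to be invoked roughly as follows: (i) each $z_{y}$ and $z_{x}$ actually lies in $V(H)$, since putting it in $N(v)$ or $N(u)$ respectively would close a $C_{4}$ through $u$ and $v$; (ii) the various $z$'s are pairwise distinct and pairwise non-adjacent, because any coincidence or extra edge would create a $C_{4}$, $C_{5}$, or $C_{6}$ routed through $u$ and/or $v$; (iii) $M(u)\cap N(v)=\emptyset$, because any $m$ in this intersection would, by $C_{4}$-freeness of $N(u)$-components (so size $\le 2$), have $N(u)$-partner $v$, forcing $N(m)=\{u,v\}$ and hence $m\in D(v)$, contradicting $D(v)=\emptyset$. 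Once (i)--(iii) are in hand the maximality of $A_{1}$ and $A_{2}$ becomes a bookkeeping exercise: each $y\in N(u)\setminus D(u)\setminus N[v]$ is dominated via $z_{y}\in Z\subseteq S$, each $x\in N(v)\setminus N[u]$ via $z_{x}\in Z'\subseteq S$, and each $y\in D(u)\setminus M(u)$ via its $D(u)$-partner in $M(u)$ (which, by (iii), lies in $A_{2}$), while the remaining vertices are handled directly by $u$, $v$, or $S$.
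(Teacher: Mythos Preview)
Your proof is correct and follows essentially the same approach as the paper: reduce to the case $D(v)=\emptyset$, $u\in N(v)$, $D(u)\neq\emptyset$, and exhibit two maximal independent sets that differ exactly by swapping $\{u\}$ for $\{v\}\cup M(u)$, then combine with Theorem~\ref{wnlc5} applied to $u$ to conclude $w(v)=0$. The only cosmetic difference is in assembling the common part $S$: the paper starts from a maximal independent set $T$ of $(N_{2}(v)\cap N_{3}(u))\cup(N_{2}(u)\cap N_{3}(v))$ and extends $T\cup\{u\}$ inside $G$, whereas you hand-pick individual witnesses $z_y,z_x$ (which in fact land in that same set) and extend inside $G[V\setminus N[\{u,v\}]]$; the cycle-length checks needed are identical, and you spell out two points the paper leaves implicit, namely $u\notin L(G)$ and $M(u)\cap N[v]=\emptyset$.
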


\begin{proof}
If $D(v)\neq\emptyset$ then according to Theorem \ref{wnlc5}, $w(v)=w(M(v))$,
and the lemma holds.

Suppose $D(v)=\emptyset$ and there exists a vertex $u\in N(v)$ such that
$D(u)\neq\emptyset$. Let $T$ be a maximal independent set of $(N_{2}(v)\cap
N_{3}(u))\cup(N_{2}(u)\cap N_{3}(v))$, let $S_{1}$ be a maximal independent
set of $G$ such that $S_{1}\supseteq T\cup\{u\}$, and let $S_{2}%
=(S_{1}\setminus\{u\})\cup M(u)\cup\{v\}$. Clearly, $w(S_{1})=w(S_{2})$,
therefore $w(u)=w(M(u)\cup\{v\})$. However, by Theorem, \ref{wnlc5}
$w(u)=w(M(u))$. Hence $w(v)=0=w(M(v))$.
\end{proof}

\begin{lemma}
\label{ifflnephi} Assume that $L(G) \neq\emptyset$. Let $w:V\longrightarrow
\mathbb{R}$ be a weight function defined on the vertices of $G$. Then $G$ is
$w$-well-covered if and only if $w(v)=w(M(v))$ for every $v \in V \setminus
L(G)$.
\end{lemma}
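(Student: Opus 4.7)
The statement is a biconditional, and I would prove the two directions separately.

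For $(\Rightarrow)$, let $v\in V\setminus L(G)$ and assume $G$ is $w$-well-covered. If some $u\in N[v]$ satisfies $D(u)\neq\emptyset$, then Lemma~\ref{lc12} immediately gives $w(v)=w(M(v))$. The harder subcase is that $D(u)=\emptyset$ for every $u\in N[v]$, which forces $D(v)=\emptyset$, $M(v)=\emptyset$, and the target $w(v)=0$. A first observation is that in a $C_{4}$-free graph every simplicial $\ell\in N(u)$ automatically lies in $D(u)$, since $N[\ell]$ is a clique containing $u$ and so all neighbours of $\ell$ lie in $N[u]$. Hence $D(u)=\emptyset$ throughout $N[v]$ forces $N_{2}[v]\cap L(G)=\emptyset$; because $L(G)\neq\emptyset$, the nearest simplicial vertex $\ell$ has $d(v,\ell)\geq 3$. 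I would then cascade weight equalities along a shortest $v$-to-$\ell$ path: vertices within distance $2$ of $\ell$ fall under Lemma~\ref{lc12} (they have a neighbour $u'\in N(\ell)$ with $\ell\in D(u')$), and iteratively comparing two maximum independent sets that differ along short segments of this path (built using Corollary~\ref{dc5c} to guarantee maximality) relates $w(v)$ to a sum of weights already pinned to $0$.

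For $(\Leftarrow)$, assume the conditions hold and let $I$ be any maximum independent set of $G$. Substituting the hypothesis on the non-simplicial part of $I$ gives
\[
w(I)=\sum_{\ell\in I\cap L(G)}w(\ell)+\sum_{v\in I\setminus L(G)}w(M(v)).
\]
The essential structural fact is that $D(v)\subseteq L(G)$ for every $v\in V$: any $x\in D(v)$ has $N(x)\subseteq N[v]$, and in a $C_{4}$-free graph the component of $N(v)$ containing $x$ is $K_{1}$ or $K_{2}$, forcing $N[x]$ to be a clique of size $\le 3$ and so $x$ simplicial. Hence each $M(v)$ is a subset of $L(G)\setminus I$. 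Distinct $v_{1},v_{2}\in I\setminus L(G)$ yield disjoint $M(v_{1}),M(v_{2})$: a common $\ell$ would force $v_{1}v_{2}\in E$ via the clique $N[\ell]$, contradicting the independence of $I$. The right-hand side is therefore a weighted sum over a disjoint union of simplicial vertices; using that each $\ell\in L(G)\setminus I$ has a unique dominator in the clique $I\cap N[\ell]$, one can match summands across any two maximum independent sets $I,I'$ and conclude $w(I)=w(I')$, so $G$ is $w$-well-covered.

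The principal obstacle is Case~B of the forward direction, namely forcing $w(v)=0$ when $v$ is at distance $\geq 3$ from every simplicial vertex. A clean path-cascading argument would like to invoke the generating-subgraph machinery, but Theorem~\ref{generating567} requires $C_{7}$-freeness which we do not have; the $C_{4}$-freeness available here must be used instead to ensure that the two maximum independent sets built along the path from $v$ to $\ell$ are genuinely independent and maximal. A secondary (more clerical) difficulty in $(\Leftarrow)$ is choosing the representatives $M(v)$ coherently across varying $I$ so that the matching of summands is bijective and identifies $w(I)$ with a single reference quantity such as $w(A)$ for a fixed maximum independent set $A$ of $G[L(G)]$.
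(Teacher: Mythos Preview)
Your forward direction correctly splits into the easy subcase (some $u\in N[v]$ has $D(u)\neq\emptyset$, handled by Lemma~\ref{lc12}) and the hard subcase where $D(u)=\emptyset$ for every $u\in N[v]$. For the hard subcase you propose cascading equalities along a shortest path from $v$ to the nearest simplicial vertex, but you do not complete this argument and you flag it yourself as the principal obstacle. The difficulty is real: intermediate vertices on that path need not have $D(\cdot)=\emptyset$, so the step ``relate $w$ of this vertex to weights already pinned to $0$'' is not available as sketched, and Corollary~\ref{dc5c} alone does not furnish the two comparable maximal independent sets you need at each stage.

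The paper resolves this subcase in one stroke via Theorem~\ref{c456related}, which you never invoke. Let $H$ be the subgraph of $G$ induced by $\{u\in V:D(u)=\emptyset\}$ and let $C$ be the component of $H$ containing $v$. For every edge $xy$ of $C$ one checks (using $C_{4}$-freeness together with $D(x)=D(y)=\emptyset$) that $N_{2}(\{x,y\})$ dominates $N(x)\bigtriangleup N(y)$, so by Theorem~\ref{c456related} every edge of $C$ is relating and all vertices of $C$ share the same weight. Since $L(G)\neq\emptyset$ forces $H\neq G$, the component $C$ has a vertex $x$ with a neighbour $y\notin C$; then $D(y)\neq\emptyset$, and Lemma~\ref{lc12} gives $w(x)=w(M(x))=0$, whence $w\equiv 0$ on all of $C$ and in particular $w(v)=0=w(M(v))$. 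This is exactly the conclusion you were aiming for, obtained without any path-cascading.

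For the converse, the paper does not carry out the matching argument you outline; it simply notes that $D(v)\subseteq L(G)$ (your observation) and that the hypothesis therefore puts $w$ into the form of Theorem~\ref{pvs}, which immediately yields $w\in WCW(G)$. Your direct approach is in the same spirit but, as you acknowledge, requires bookkeeping to align the sets $M(v)$ across different maximal independent sets; the paper sidesteps this by appealing to the already-proved Theorem~\ref{pvs}. (Incidentally, throughout your $(\Leftarrow)$ argument you write ``maximum'' where you mean ``maximal''.)
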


\begin{proof}
\textit{If part:} Assume that $G$ is $w$-well-covered. Let $v \in V \setminus
L(G)$. It is enough to prove that $w(v)=w(M(v))$.

If there exists a vertex $u\in N\left[  v\right]  $ such that $D(u)\neq
\emptyset$ then $w(v)=w(M(v))$ by Lemma \ref{lc12}.

Suppose that $D(u)=\emptyset$ for every $u\in N\left[  v\right]  $. Let $H$ be
the subgraph of $G$ induced by $\{v\in V:D(v)=\emptyset\}$. The fact that
$L(G)\neq\emptyset$ implies that $H\neq G$. Let $C$ be the connected component
of $H$ which contains $v$. By Theorem \ref{c456related} all edges of $C$ are
relating. Therefore, all vertices of $C$ are of the same weight. There exists
a vertex $x\in C$ which is adjacent to a vertex $y\notin C$. Clearly,
$D(y)\neq\emptyset$. Lemma \ref{lc12} implies that $w(x)=w(M(x))=0$. Hence,
$w(z)=0=w(M(z))$ for every vertex $z\in C$.

\textit{Only if part:} Assume that $w(v)=w(M(v))$ for every $v \in V \setminus
L(G)$, and $L(G) \neq\emptyset$. Since $L(G)$ is the set of simplicial
vertices in the graph, Theorem \ref{pvs} implies that $w \in VS(G)$.
\end{proof}

\begin{corollary}
\label{c456wc} Assume that $L(G) \neq\emptyset$. Then $G$ is well-covered if
and only if $D(v)$ is a copy of $K_{1}$ or $K_{2}$, for every $v \in V
\setminus L(G)$.
\end{corollary}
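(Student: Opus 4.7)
The plan is to reduce the well-covered question to an application of Lemma \ref{ifflnephi} with the constant weight function $w\equiv 1$, since by definition $G$ is well-covered if and only if $G$ is $w$-well-covered for this particular $w$. The hypothesis $L(G)\neq\emptyset$ is exactly what is needed to invoke that lemma, which then says that $G$ is well-covered if and only if $1 = w(v) = w(M(v)) = |M(v)|$ for every $v\in V\setminus L(G)$. So the task reduces to characterizing, in structural terms on $D(v)$, when $|M(v)|=1$.

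Next I would translate $|M(v)|=1$ into a condition on the graph $D(v)$. As noted in the opening of Section 6, the hypothesis $G\in G(\widehat{C_{4}},\widehat{C_{5}},\widehat{C_{6}})$ forces every connected component of $D(v)$ to be a copy of $K_{1}$ or $K_{2}$. In such a disjoint union of edges and isolated vertices, every maximal independent set must contain exactly one vertex from each connected component; hence $|M(v)|$ equals the number of connected components of $D(v)$, independently of the choice of $M(v)$.

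Combining these two observations gives the corollary: the condition $|M(v)|=1$ for every $v\in V\setminus L(G)$ is equivalent to $D(v)$ having exactly one connected component for every such $v$, and by the structural fact above, that single component is either a $K_{1}$ or a $K_{2}$. In particular the case $D(v)=\emptyset$ is excluded because it would give $|M(v)|=0\neq 1$, so being well-covered truly forces $D(v)$ to be a single $K_{1}$ or $K_{2}$ on the nose.

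The only real point to be careful about is that $|M(v)|$ is well-defined as a numerical invariant of $D(v)$ (not just a property of a specific choice of $M(v)$), but this is immediate from the $K_{1}/K_{2}$ component structure. I do not foresee a genuine obstacle here: once Lemma \ref{ifflnephi} is in hand, the corollary is essentially the specialization of that statement to $w\equiv 1$, with the component structure of $D(v)$ replacing the arithmetic condition $w(v)=w(M(v))$.
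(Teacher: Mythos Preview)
Your argument is correct and is exactly the intended derivation: the paper states this result as an unproved corollary of Lemma \ref{ifflnephi}, and your specialization to $w\equiv 1$ together with the component structure of $D(v)$ noted at the start of Section 6 is precisely how it follows.
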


\begin{lemma}
\label{lphirelating} If $L(G)=\emptyset$ then all edges of $G$ are relating.
\end{lemma}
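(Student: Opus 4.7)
The plan is to invoke Theorem \ref{c456related}, which is available because $G \in G(\widehat{C_4}, \widehat{C_5}, \widehat{C_6}) \subseteq G(\widehat{C_5}, \widehat{C_6})$; so for a fixed edge $xy \in E$ it suffices to check that $N_2(\{x,y\})$ dominates $N(x) \bigtriangleup N(y)$. Let $u$ be any vertex of $N(x) \bigtriangleup N(y)$ other than $x$ and $y$; by symmetry I may assume $u \in N(x) \setminus N(y)$, and I aim to produce a neighbor of $u$ lying in $N_2(\{x,y\})$.

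The hypothesis $L(G) = \emptyset$ guarantees $d(u) \geq 2$, so $u$ admits at least one neighbor $u' \neq x$. I classify $u'$ by its location: the case $u' = y$ is impossible because $u \notin N(y)$; if $u' \in N(y)$, then $x-u-u'-y-x$ is a $4$-cycle, contradicting $G \in G(\widehat{C_4})$; if $u' \in N(x) \setminus \{u\}$, then $x,u,u'$ form a triangle; and the remaining possibility is $u' \notin N[\{x,y\}]$, in which case $u'$ is adjacent to $u \in N(x)$ and has $d(u', \{x,y\}) = 2$, placing $u' \in N_2(\{x,y\})$, which is the dominator sought.

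It remains to rule out the scenario in which every neighbor of $u$ distinct from $x$ belongs to $N(x)$ and forms a triangle with $x$. If $u$ has exactly one such neighbor $u'$, then $d(u) = 2$ and $u$ lies on the triangle $xuu'$, putting $u$ in $L(G)$ --- a contradiction. If $u$ has two distinct such neighbors $u_1', u_2' \in N(x) \cap N(u)$, then $u - u_1' - x - u_2' - u$ is a $4$-cycle, again contradicting $G \in G(\widehat{C_4})$. Neither sub-case can occur, so $u$ must have a neighbor in $N_2(\{x,y\})$, the domination condition is verified, and Theorem \ref{c456related} yields that $xy$ is relating. The only real obstacle is the exhaustive case split on $u'$; the critical ingredients are $L(G) = \emptyset$ (to exclude the degree-two-on-triangle configuration) and $C_4$-freeness (to forbid both a neighbor in $N(y)$ and a pair of triangle-mates of $u$ in $N(x)$).
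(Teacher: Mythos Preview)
Your proof is correct and follows essentially the same route as the paper's: both invoke Theorem~\ref{c456related} and verify its domination hypothesis. The paper's argument is more compressed---it first observes that $L(G)=\emptyset$ forces $D(v)=\emptyset$ for every $v$ (since any $u\in D(v)$ has all its neighbours in $N[v]$, and $C_4$-freeness then forces $d(u)\le 2$, putting $u\in L(G)$), and deduces the domination condition from that---whereas your explicit case split on the location of $u'$ is precisely the unpacking of that same reasoning.
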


\begin{proof}
$L(G)=\emptyset$ implies $D(v)=\emptyset$ for every $v\in V$. Therefore,
$N_{2}(\{x,y\})$ dominates $N(x)\bigtriangleup N(y)$ for every edge $xy$.
Hence, by Theorem \ref{c456related} all edges of $G$ are relating.
\end{proof}

\begin{lemma}
\label{lphi} Assume that $L(G)=\emptyset$. Then the following holds.

\begin{enumerate}
\item If $G$ is isomorphic to either $C_{7}$ or $T_{10}$ Then $w\in WCW(G)$ if
and only if there exists $k\in\mathbb{R}$ such that $w\equiv k$.

\item Otherwise, $WCW(G)$ contains only the zero function.
\end{enumerate}
\end{lemma}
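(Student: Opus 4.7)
\medskip

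\noindent\textbf{Proof proposal for Lemma \ref{lphi}.} The plan is to chain together Lemma \ref{lphirelating} and Theorem \ref{wcc45} to pin down $WCW(G)$ completely. First I would apply Lemma \ref{lphirelating}: since $L(G)=\emptyset$, every edge of $G$ is relating, so any $w\in WCW(G)$ must satisfy $w(x)=w(y)$ whenever $xy\in E$. Propagating this equality along paths, $w$ must be constant on each connected component of $G$; in particular, when $G$ is connected (as is the case for both $C_7$ and $T_{10}$), every candidate weight function has the form $w\equiv k$ for some $k\in\mathbb{R}$.

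For part (1), I would invoke Theorem \ref{wcc45}, which guarantees that $C_7$ and $T_{10}$ are well-covered. Hence all their maximal independent sets share a common cardinality $\alpha$, and any constant weight $w\equiv k$ gives $w(S)=k\alpha$ on every maximal independent set $S$, so $w\in WCW(G)$. Combined with the preceding step, this yields $WCW(G)=\{k\cdot\mathbf{1}:k\in\mathbb{R}\}$, proving the ``iff'' in item (1).

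For part (2), suppose $G\not\cong C_7$ and $G\not\cong T_{10}$. Recall from the opening of Section 6 that within $G(\widehat{C_{4}},\widehat{C_{5}},\widehat{C_{6}})$ the set $L(G)$ coincides with the set of simplicial vertices, so $L(G)=\emptyset$ means $G$ has \emph{no} simplicial vertex. Consequently condition (1) of Theorem \ref{wcc45} fails (any partition of $V$ by closed neighborhoods of simplicial vertices would require at least one simplicial vertex as soon as $V\neq\emptyset$), and condition (2) fails by assumption, so $G$ is not well-covered. Pick two maximal independent sets $S_{1},S_{2}$ with $|S_{1}|\neq|S_{2}|$; using the constant form $w\equiv k$ extracted in the first step, the identity $k|S_{1}|=w(S_{1})=w(S_{2})=k|S_{2}|$ forces $k=0$, so $WCW(G)=\{0\}$.

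The main potential obstacle is handling a disconnected $G$: the argument above works cleanly component by component, so only those components already isomorphic to $C_7$ or $T_{10}$ can carry a nonzero constant, consistent with the lemma's dichotomy once ``isomorphic to $C_7$ or $T_{10}$'' is read as a statement about $G$ as a whole. I do not anticipate any deeper difficulty, since the crux is simply exploiting the absence of simplicial vertices to block condition (1) of Theorem \ref{wcc45}.
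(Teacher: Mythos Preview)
Your proposal is correct and follows essentially the same route as the paper: use Lemma \ref{lphirelating} to force any $w\in WCW(G)$ to be constant, then apply Theorem \ref{wcc45} together with the observation that $L(G)=\emptyset$ rules out simplicial vertices (hence condition (1) of Theorem \ref{wcc45}) to conclude that $G$ is well-covered precisely when $G\cong C_7$ or $G\cong T_{10}$. The paper's argument is identical in structure; your version is slightly more explicit about the connectedness issue, which the paper glosses over.
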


\begin{proof}
By Lemma \ref{lphirelating}, all edges of $G$ are relating. Therefore, if
$w\in WCW(G)$, then all vertices of $G$ are of the same weight. Hence, it
should be decided which of the following two cases holds.

\begin{enumerate}
\item $G$ is well-covered. Hence, $w\in WCW(G)$ if and only if there exists
$k\in\mathbb{R}$ such that $w\equiv k$. In this case $wcdim(G)=1$.

\item $G$ is not well-covered. Hence, $w\in WCW(G)$ if and only if $w\equiv0$.
In this case $wcdim(G)=0$.
\end{enumerate}

Since $L(G)=\emptyset$, there are no simplicial vertices in $G$. Consequently,
the first condition of Theorem \ref{wcc45} does not hold. By Theorem
\ref{wcc45} the graph is well-covered if and only if it is isomorphic to
$C_{7}$ or to $T_{10}$.
\end{proof}

An example of the above is the graph $D_{12}$. Clearly, $L(D_{12})=\Phi$, and
the graph does not contain simplicial vertices. All edges in the graph are
relating. The graph is not well-covered because $\{v_{3},v_{6},v_{9},v_{12}\}$
and $\{v_{3},v_{5},v_{7},v_{9},v_{12}\}$ are two maximal independent sets of
with different cardinalities. Therefore $WCW(D_{12})$ contains only the zero function.

\subsection{The Algorithm and its Complexity}

The following algorithm receives as its input a graph $G=(V,E)\in
G(\widehat{C_{4}},\widehat{C_{5}},\widehat{C_{6}})$, and finds $WCW(G)$. All
elements of $WCW(G)$ can be obtained by this algorithm.

\begin{algorithm}
Vector Space
\end{algorithm}

\begin{enumerate}
\item Find $L(G)$.

\item \textbf{If} $G$ is isomorphic to $C_{7}$ or to $T_{10}$

\begin{enumerate}
\item Assign an arbitrary value for $k$.

\item For each $v\in V$ denote $w(v)=k$
\end{enumerate}

\item \textbf{Else}

\begin{enumerate}
\item Find a maximal independent set $S$ of $L(G)$, and assign arbitrary
weights to the elements of $S$.

\item For each vertex $l\in L(G)\setminus S$ denote $w(l)=w(N(l)\cap S)$.

\item For each $v\in V\setminus L(G)$

\begin{enumerate}
\item Find $D(v)$.

\item Construct a maximal independent set $M(v)$ of $D(v)$.

\item Denote $w(v)=w(M(v))$
\end{enumerate}
\end{enumerate}
\end{enumerate}

\textbf{Correctness of the algorithm.} If the condition of Step 2 holds, then
by Theorem \ref{wcc45} and Lemma \ref{lphirelating}, the algorithm returns
$WCW(G)$.

Assume that the condition of Step 2 does not hold. Denote the elements of the
set $S$ found in Step 3a by $S=\{s_{1},...,s_{|S|}\}$. Then $wcdim(G)=|S|$,
and $w(s_{1}),...,w(s_{|S|})$ are the free variables of the vector space.
Every connected component of $L(G)$ contains at most $2$ vertices. According
to Step 3b, if $l_{1}$ and $l_{2}$ are two vertices of the same connected
component of $L(G)$, then $w(l_{1})=w(l_{2})$. Therefore, if $x\in V\setminus
L(G)$, and $T_{1}$, $T_{2}$ are two maximal independent sets of $G[N(x) \cap
L(G)]$, then $w(T_{1})=w(T_{2})$.

In Step 3c, $D(v)\subseteq L(G)$, for every vertex $v\in V\setminus L(G)$. The
set $M(v)$ can be constructed in more than one possible way. However,
$w(M(v))$ is uniquely defined by Step 3b. If $L(G)\neq\emptyset$, by Lemma
\ref{ifflnephi}, the algorithm returns $WCW(G)$. If $L(G)=\emptyset$, by Lemma
\ref{lphi}, the algorithm returns $WCW(G)$.

\textbf{Complexity analysis.} Steps 1 and 2 run in $O(\left\vert V\right\vert
)$ time. Steps 3a and 3b can be implemented in $O(|E|)$ time. Step 3c is a
loop with $\left\vert V\right\vert $ iterations. Each iteration can be
implemented in $O(\left\vert E\right\vert )$ time. Therefore, the total
complexity of Step 3c is $O(\left\vert V\right\vert \left\vert E\right\vert
)$, which is the total complexity of the whole algorithm as well.

\section{Open Question}

In this paper we presented a polynomial algorithm whose input is a graph in
$G(\widehat{C_{4}},\widehat{C_{5}},\widehat{C_{6}})$, and its output is
$WCW(G)$. On the other hand, there is a polynomial characterization of
well-covered graphs without cycles of lengths $4$ and $5$ \cite{fhn:wc45}.
Thus it is a natural step in learning $w$-well-covered graphs to ask whether
the following problem is polynomially solvable.\newline\textit{Input:} A graph
$G=(V,E)\in G(\widehat{C_{4}},\widehat{C_{5}})$.\newline\textit{Output:}
$WCW(G)$.

\end{document}